\pdfoutput=1
\documentclass[10pt,conference]{IEEEtran}
\IEEEoverridecommandlockouts

\usepackage{cite}
\usepackage{amsmath,amssymb,amsfonts}
\usepackage{amsthm}
\usepackage{algorithm}
\usepackage{algorithmic}
\usepackage{graphicx}
\usepackage{textcomp}
\usepackage[table]{xcolor}
\usepackage{xspace}
\usepackage{booktabs}
\usepackage{array}
\usepackage{tikz}
\usetikzlibrary{arrows.meta,positioning,calc,fit,backgrounds}
\usepackage[hidelinks]{hyperref}
\usepackage{listings}
\usepackage{multirow}

\lstdefinestyle{cedarstyle}{
  basicstyle=\ttfamily\scriptsize,
  backgroundcolor=\color{black!5},
  frame=none,
  showstringspaces=false,
  breaklines=true,
}

\def\BibTeX{{\rm B\kern-.05em{\sc i\kern-.025em b}\kern-.08em
    T\kern-.1667em\lower.7ex\hbox{E}\kern-.125emX}}

\newtheorem{theorem}{Theorem}
\newtheorem{definition}{Definition}
\newtheorem{proposition}{Proposition}

\newcolumntype{L}[1]{>{\raggedright\arraybackslash}p{#1}}

\newcommand{\sem}{\mathrm{sem}}
\newcommand{\sys}{\textup{\textsc{AutoCedar}}}

\begin{document}

\title{\sys: An Agentic Framework for\\
Verifier-Guided Access Control Policy Synthesis}

\author{\IEEEauthorblockN{Adarsh Vatsa, Sachi Shome, Yingming Zhou, and William Eiers}
\IEEEauthorblockA{Stevens Institute of Technology\\
\{avatsa, sshome, yzhou136, weiers\}@stevens.edu}}

\maketitle
\begin{abstract}
Large Language Models are increasingly used to turn natural-language
requirements into code. In access control, that shortcut is dangerous: a
generated policy can compile and read correctly while granting access that no
one approved. The difficulty is not only writing policy code. It is fixing what
the requirements mean before code is written, and then checking that the final
policy actually satisfies that intent. We present \sys{}, a verifier-guided
system that first turns natural-language access-control requirements into a
reviewed, checkable target, and then synthesizes Cedar policies against that
target. \sys{} decomposes schema and policy authoring into small intent atoms:
reviewable claims about vocabulary and behavior. Once those atoms pass
mechanical validation and human intent review, the model proposes a candidate
policy, the verifier checks it against the approved target, and each failure is
turned into a repair signal that tells the model whether to broaden, narrow, or
restructure the policy without changing the target. Because the model's work is
split into small problems, each grounded in reviewed intent and backed by
verifier feedback, end-to-end policy authoring becomes tractable. \sys{}
converges on all 221 tasks of CedarBench, our benchmark of authorization tasks
paired with executable semantic boundaries. Across three
requirements-corpus case studies covering healthcare, education, and conference management, \sys{} converts noisy prose and extracted
access-control fragments into reviewed schemas, formal checks, and a globally
verified Cedar policy store for each scenario.
\end{abstract}

\section{Introduction}\label{sec:intro}

Access control is ubiquitous across deployed software systems, yet the gap
between natural-language access control requirements and the formal policies that
enforce them remains a persistent source of misconfiguration and vulnerability.
This gap makes access control the most common source of security failures: it ranks first in the OWASP Top 10 and appeared in every tested application in their survey~\cite{owasp2025brokenaccess}. What makes these failures so persistent is not that any single access control rule is hard to write; individual rules can look reasonable while combining to permit or deny requests no author intended. Instead, it is that correctness is a property of the entire system: a policy is correct only when its entire set of rules, taken together, permits exactly the requests it should and no others.

This problem compounds in modern software development, where AI tools now author a large and growing share of committed code. 
Fewer than half of developers say they always verify AI-generated code before it ships, and a substantial share report that reviewing it takes more effort than reviewing human-written code~\cite{sonar2026trustgap}. Authorization logic is not exempt: more of it is being generated, faster, by systems with no internal signal for whether a rule matches intent. Automated vulnerability discovery throws the resulting gap into relief rather than closing it: Anthropic's Project Glasswing reports that around fifty partners using Claude Mythos have found more than ten thousand critical vulnerabilities in important software, and similarly a parallel scan of over a thousand open-source projects surfaced 6,202 high or critical findings out of 23,019 candidates, yet the disclosed bugs are overwhelmingly memory-safety failures and cryptographic defects, with authorization logic barely represented~\cite{anthropic2026glasswing}. The reason is structural: a memory bug carries its own ground truth, since a buffer overflow trips a sanitizer and crashes with no statement of intent required, while ``user A can read user B's record'' is a defect only relative to a policy the code never states. The moment when more authorization logic is being written by fewer human eyes is also the moment when the field's most powerful bug-finders are least able to catch it. Which makes correctness at authoring time, not detection after deployment, the place where the leverage actually lies.

A natural tool for authoring-time correctness is the large language model, which can turn natural-language requirements into policy text directly~\cite{vatsa2025synth,vatsa2025explore}. 
But a model can produce a syntactically valid, plausible-looking policy that is semantically wrong, with no internal signal that anything is off: Vatsa et al.~\cite{vatsa2025explore} report that policies written via roundtrip generation have semantic equivalence rates of 45.8\% for non-reasoning models and 93.7\% for reasoning models relative to the ground-truth policies. 
A natural response is to pair the model with a verifier, so that correctness guarantees come from the verifier rather than the model, separating generation from checking. 
We argue this is only half of the problem: the requirements themselves are usually incomplete, leaving cases unsaid.
No amount of verification against an under-specified, informal intent can fill that gap. 
The Cedar~\cite{cedar2024} policy language makes verification more tractable since policies are declarative, typed against an explicit schema, and open to symbolic comparison so the property we want is not that a generated policy looks reasonable, but that the set of requests it permits falls inside a reviewed authorization boundary. This boundary does not exist in ordinary requirements text; it has to be constructed, reviewed, and kept fixed while the policy is synthesized.

In this work we present \sys{}, which constructs that reviewed boundary by completing intent rather than merely translating requirements into policy text. Instead of treating the requirements as a finished specification, \sys{} surfaces their gaps during authoring: when the requirements leave something unsaid, a property the schema cannot yet support triggers a repair that the reviewer resolves, so the intent is completed before the policy is finalized rather than left implicit. \sys{} converges on all 221 tasks of CedarBench, our benchmark of authorization tasks paired with executable semantic boundaries, and across three ACRE/REDE-derived requirements corpora for healthcare, education, and conference management it turns fragments that cannot be deployed on their own into globally checked Cedar policy stores. Because that completed intent is recorded explicitly rather than baked silently into the final rules, \sys{} also produces a durable, updatable log of intent: when requirements change, the log is edited and a fresh, verified policy is re-authored from it, making the system not a one-shot synthesizer but a substrate for ongoing policy maintenance. It thus turns text expressing access-control intent, in any format, into deployable, verified Cedar. The contribution is the boundary-preserving authoring loop: \sys{} fixes what the policy is supposed to mean, keeps that target outside the model's control, and uses verifier evidence to search for policy code that satisfies it.

\textbf{Contributions.} We make four contributions.
\begin{enumerate}
  \item \sys{}, a verifier-guided authoring loop that turns fragmented
  natural-language access-control intent into a reviewed Cedar schema and
  boundary plan of floors, ceilings, liveness slices, and derived checks.
  \item A CEGIS-style synthesis discipline for LLM policy generation: the model
  proposes candidates, the verifier owns the target, failure history, and
  accept/reject decision.
  \item A signal-layer contract that translates verifier evidence into
  target-preserving repair packets across target admission, semantic direction,
  local Cedar diagnostics, and search history.
  \item CedarBench, a 221-scenario verification benchmark, plus three
  ACRE/REDE-derived case studies showing end-to-end transformation from
  NL/NLACP intent into deployable, globally checked Cedar policy stores.
\end{enumerate}
The implementation, benchmarks, and reproduction materials are available at
\url{https://github.com/neselab/cedar-synthesis-engine}.

\section{Background and Preliminaries}
\label{sec:background}

\subsection{Cedar Policy Language}
Cedar is a language for writing authorization policies and making authorization
decisions based on those policies~\cite{cedar2024,cedarpolicy}. We now summarize the relevant elements of Cedar below.

\textbf{Schema.} A Cedar schema $\Sigma$ fixes the vocabulary that policies are 
permitted to reference. It defines the types of entities recognized by the application
or context in which the policy operates. The entities include \textit{principals} (who
is doing the access), \textit{actions} (what is being performed), \textit{resources}
(what is being accessed), and \textit{context} (environment information) referenced in policies. Cedar uses $\Sigma$ to validate policies at authoring time, ensuring that every entity, attribute, and action a policy references is consistent with $\Sigma$. 


\textbf{Policies.} A Cedar policy $\pi$ is a set of one or more declarative rules which describes what \textit{principals} can perform which \textit{actions} on which \textit{resources} in what \textit{context}. $\pi$ specifies the \textit{effect}, \textit{scope}, and optional \textit{conditions}:
\[
\texttt{effect}\;(\textit{principal}, \textit{action}, \textit{resource}) [\texttt{when}\;\varphi_1][\texttt{unless}\;\varphi_2]
\]
where $\texttt{effect} \in \{\texttt{permit}, \texttt{forbid}\}$; scope restricts which principals, actions, and resources the policy applies to; $\varphi_1$,$\varphi_2$ are
boolean conditions over entity attributes drawn from $\Sigma$.



\subsection{Terminology}

\textbf{Requests, schemas, and policies.} We write a Cedar authorization
request as $\rho=(p,\alpha,o,\gamma)$: principal, action, resource, and
context. For a
fixed schema $\Sigma$, these declarations induce the request universe
$R_\Sigma$, the domain over which every comparison, witness, and certification
claim is quantified. A policy bundle $P$ is a finite set of
\texttt{permit} and \texttt{forbid} rules guarded by Cedar conditions.

\textbf{Permit-set denotation.} Once $\Sigma$ fixes the universe, two
policy texts should be compared by behavior. Cedar uses deny-overrides
semantics: a request is permitted only when at least one matching
\texttt{permit} applies and no matching \texttt{forbid} applies. We write
$
\sem_\Sigma(P)=
\{\rho\in R_\Sigma \mid \mathsf{permit}_\Sigma(P,\rho)\}
$
for the permit-set denotation of $P$, writing $\sem(P)$ when $\Sigma$ is
fixed. Two different Cedar encodings are equivalent for
synthesis when they permit exactly the same requests.

\textbf{Symbolic policy comparison.} Cedar comes with a symbolic compiler
\texttt{cedar symcc} which translates Cedar policies into SMT formulae,
enabling  comparison between policies through formulae satisfiability. 
\texttt{symcc} allows us to compare the permit
sets over the supported Cedar fragment rather than over a finite test
suite. For policy bundles $P_1$ and $P_2$, it can decide inclusion
$\sem(P_1)\subseteq\sem(P_2)$, equivalence
$\sem(P_1)=\sem(P_2)$, and satisfiability or emptiness properties; when an
inclusion check fails, it returns a concrete request in the relevant set
difference. This follows the same lineage as SMT-based policy analyzers
such as Zelkova~\cite{zelkova2018}. The property we use is that many
requirements can themselves be encoded as Cedar policies.


\begin{figure*}[t]
\centering
\begin{tikzpicture}[
  x=1cm,y=1cm,
  every node/.style={font=\footnotesize},
  artifact/.style={draw, rounded corners=2pt, align=center,
                   minimum height=0.68cm, inner xsep=4pt, inner ysep=3pt,
                   fill=black!2, outer sep=0pt},
  review/.style={artifact, fill=blue!5, draw=blue!55!black},
  target/.style={artifact, fill=teal!6, draw=teal!60!black},
  run/.style={artifact, fill=black!3, draw=black!60},
  cert/.style={artifact, fill=green!5, draw=green!45!black},
  signal/.style={artifact, fill=orange!8, draw=orange!70!black},
  arr/.style={-{Stealth[length=2mm]}, thick},
  ctx/.style={-{Stealth[length=2mm]}, thick, teal!70!black},
  feed/.style={-{Stealth[length=2mm]}, thick, orange!80!black},
  audit/.style={-{Stealth[length=2mm]}, thick, green!55!black}]

  \node[font=\scriptsize\bfseries, anchor=west] at (0.15,4.58)
        {Reviewed target: fixed before candidate search};
  \node[review, minimum width=1.65cm] (d) at (0.95,3.82)
        {$G_D$\\source DAG};
  \node[review, minimum width=1.95cm] (a) at (3.00,3.82)
        {$A$\\approved atoms};
  \node[target, minimum width=1.75cm] (sigma) at (5.15,3.82)
        {$\Sigma$\\schema};
  \node[target, minimum width=2.05cm] (pi) at (7.45,3.82)
        {$\Pi$\\plan};
  \node[target, minimum width=3.35cm] (bounds) at (10.55,3.82)
        {$\mathcal F,\mathcal C,\mathcal G$\\compiled checks};

  \node[font=\scriptsize\bfseries, anchor=west] at (0.15,2.66)
        {Untrusted proposal, deterministic checking};
  \node[run, minimum width=1.75cm] (mt) at (0.95,1.98)
        {$m_t$\\repair};
  \node[run, minimum width=1.75cm] (llm) at (3.00,1.98)
        {LLM\\proposer};
  \node[run, minimum width=1.70cm] (pt) at (5.10,1.98)
        {$P_t$\\candidate};
  \node[run, minimum width=2.10cm] (ver) at (7.45,1.98)
        {Cedar\\checker};
  \node[run, minimum width=1.95cm] (qt) at (9.95,1.98)
        {$Q_t$\\records};
  \node[cert, minimum width=2.25cm] (out) at (13.05,1.98)
        {evidence\\$(\Sigma,\Pi,P,T)$};

  \node[signal, minimum width=8.25cm, minimum height=0.86cm] (phi)
        at (5.075,0.54)
        {\textbf{signal layer $\Phi$}: outcome class, direction, witness, provenance, history};
  \node[cert, minimum width=3.25cm, minimum height=0.86cm] (trace)
        at (13.05,0.54)
        {$T$ trace\\source $\to$ atom $\to$ boundary\\check $\to$ candidate clause};

  \begin{pgfonlayer}{background}
    \node[draw=teal!45!black, fill=teal!3, rounded corners=3pt,
          fit=(d)(a)(sigma)(pi)(bounds), inner sep=6pt] {};
  \end{pgfonlayer}

  \draw[arr] (d) -- (a);
  \draw[arr] (a) -- (sigma);
  \draw[arr] (sigma) -- (pi);
  \draw[arr] (pi) -- (bounds);
  \draw[arr] (mt) -- (llm);
  \draw[arr] (llm) -- (pt);
  \draw[arr] (pt) -- (ver);
  \draw[arr] (ver) -- (qt);
  \draw[arr] (qt) -- node[above, pos=0.50, font=\scriptsize,
        fill=white, inner sep=0.6pt] {all pass} (out);
  \draw[audit] (out.south) -- (trace.north);

  \draw[ctx] (pi.south) -- node[right, font=\scriptsize, black,
        fill=white, inner sep=1pt] {fixed target} (ver.north);
  \coordinate (faildrop) at ($(qt.south)+(0,-0.48)$);
  \coordinate (failcorner) at (phi.north east |- faildrop);
  \draw[feed, rounded corners=2pt] (qt.south) -- (faildrop) --
        (failcorner) -- (phi.north east);
  \node[font=\scriptsize, black, fill=white, inner sep=0.8pt, anchor=west]
        at ($(qt.south)+(0.08,-0.29)$) {failures};
  \draw[feed] (phi.north west) -- node[midway, left, font=\scriptsize, black]
        {next $m_{t+1}$} (mt.south);
\end{tikzpicture}
\caption{\sys{} as fixed-target construction plus verifier-guided
search. Raw requirements $D$ are indexed into a source DAG $G_D$: stable
source nodes and bounded context packets used to propose atoms. $A$ is the set of
approved schema and property atoms, $\Sigma$ is the Cedar schema, $\Pi$ is
the boundary plan, and
$\mathcal F,\mathcal C,\mathcal G$ are the approved floors, ceilings, and
liveness slices. Some review-level constraints are derived: they are checked
only after \sys{} compiles them into primitive verifier obligations. The model
proposes candidates $P_t$ from repair packets $m_t$, while the Cedar checker
validates $P_t$ and runs \texttt{cedar symcc} to produce check records $Q_t$
against the fixed plan. The signal layer $\Phi$ turns failed records into the
next repair packet; passing records produce an evidence package whose trace
$T$ records source spans, atoms, boundaries, checks, and associated candidate
clauses.}
\label{fig:arch}
\end{figure*}

\section{\sys{}}
\label{sec:framework}

\sys{} starts from a problem that a direct translator cannot solve: policy text
cannot be checked until the intended behavior has been made checkable. The
system therefore has two jobs. First, it turns requirements into a reviewed
semantic target over a Cedar schema. Then it lets an LLM search for policy text
while a verifier checks candidates against the target.

\subsection{Overview}
The failure mode of direct LLM synthesis is simple: policy text gets written before intent is fixed. \sys{} reverses this order, turning requirements into a reviewed, checkable target before the model proposes any candidate.

Figure~\ref{fig:arch} shows the full workflow, and Algorithm~\ref{alg:autocedar}
spells out the control flow. Requirements enter as a document $D$.
\sys{} indexes $D$ into a source DAG $G_D$, then proposes schema atoms and
property atoms from bounded source packets. A reviewer accepts only atoms that
match the intended behavior; approved schema atoms compile into a schema
$\Sigma$, and approved property atoms compile into a boundary plan $\Pi$. Only
then does candidate-policy search begin: the model proposes policy bundle
$P_t$, the verifier produces check records $Q_t$, and the signal layer turns
failed records into the next repair packet $m_{t+1}$. Success is a Cedar file plus evidence that it satisfies
the fixed target.

\sys{} does not prompt over $D$ as one undifferentiated blob. Instead,
it uses $G_D$: stable source nodes and bounded context packets that keep each
proposal tied to a local, auditable slice of the requirements. Schema atoms say
what exists in
the authorization world: users, documents, roles, actions, attributes, and
context fields. Property atoms say what behavior should hold over that world:
which requests must be allowed, which requests must not leak through, which
workflow classes must remain possible, and which constraints
should compile into verifier checks. The human review step decides whether
those atoms capture the intended behavior.

The target is not fixed after the first pass of review. During target
construction, \sys{} may discover that a property atom needs schema support the
current schema does not provide, that a current property should be rejected or
repaired, that a prior property now conflicts with the emerging target, or that
an approved source node still lacks coverage. These repairs happen before
candidate-policy search. The fixed target begins only after schema repair,
property repair, coverage accounting, and admission checks have produced an
approved Cedar schema $\Sigma$ and boundary plan $\Pi$. The LLM then has a
narrower job: propose a candidate Cedar policy $P_t$ under that fixed target.
The verifier checks the candidate against the entire plan. If the candidate is
too broad, too narrow, syntactically invalid, or has made an approved workflow
impossible, \sys{} sends the model a repair packet rather than a vague failure
message. If every check passes, \sys{} returns
$
(\Sigma,\Pi,P,T),
$
where $\Sigma$ is the Cedar schema, $\Pi$ is the approved boundary plan,
$P$ is the synthesized policy bundle, and $T$ is the evidence trace recording
source text, approved atoms, boundaries, verifier records, and candidate clauses
associated with those records. 

\begin{algorithm}[t]
\caption{\sys{} synthesis workflow}
\footnotesize
\label{alg:autocedar}
\begin{algorithmic}[1]
\REQUIRE Access-control requirements $D$; iteration budget $K$
\ENSURE Verified evidence package $(\Sigma,\Pi,P,T)$ or a reportable failure
\STATE Index $D$ into source DAG $G_D$ with source nodes and bounded context packets
\STATE Propose and review schema atoms from source packets
\STATE Propose property atoms one packet at a time with coverage tracking
\WHILE{a property exposes missing schema support or target conflict}
  \STATE Repair schema or property atoms; rerun mechanical and intent review
\ENDWHILE
\STATE Compile approved atoms into schema $\Sigma$ and boundary plan $\Pi$
\IF{$\Sigma$ fails validation or admission checks reject $\Pi$}
  \RETURN target-admission failure with reviewer-facing diagnostics
\ENDIF
\STATE $H \leftarrow \emptyset$; $m \leftarrow$ summary of the fixed target
\FOR{$t=1$ to $K$}
  \STATE Ask the model for a candidate policy $P_t$ using repair packet $m$
  \IF{$P_t$ does not parse or type-check under $\Sigma$}
    \STATE $m \leftarrow$ local diagnostic repair packet for $P_t$
  \ELSE
    \STATE $Q_t \leftarrow$ verifier records for checking $P_t$ against $\Pi$
    \IF{all records in $Q_t$ pass}
      \STATE Record trace $T$ with source spans, atoms, boundaries, records, and associated clauses
      \RETURN $(\Sigma,\Pi,P_t,T)$
    \ENDIF
    \STATE $m \leftarrow \Phi(Q_t,P_t,\Pi,H)$
  \ENDIF
  \STATE Append the candidate, diagnostics or verifier records, and $m$ to $H$
\ENDFOR
\RETURN budget failure with target, history, and verifier records
\end{algorithmic}
\end{algorithm}

\subsection{Verification-Guided Search}
\label{sec:sandwich}

The verifier cannot check a candidate against the original prose. It needs a
semantic target. In this setting, that target is not a preexisting reference
implementation, because writing the implementation is the problem. \sys{} instead
builds a reviewed boundary of acceptable behavior. A boundary is necessary because two
Cedar policies can express the same access-control behavior with different
clauses, and two policies that look almost identical can permit different
requests. A single golden policy would make the target too brittle. A boundary
lets the verifier ask the question that matters: is the candidate inside the
approved behavior and does it include the required behavior?

Once a schema $\Sigma$ fixes the typed request universe $R_\Sigma$, every
candidate policy bundle $P$ denotes a set of permitted requests
$\sem(P)\subseteq R_\Sigma$. The primitive plan asks three questions about that
set. What must be inside it? What must stay outside it? What workflow class
must not be emptied?
\sys{} represents those answers as a boundary plan
$\Pi=(\mathcal{C},\mathcal{F},\mathcal{G})$ where $\mathcal{C}$ is a set of
ceiling policies, $\mathcal{F}$ is a set of floor policies, and
$\mathcal{G}$ is a set of reviewed liveness slices. Each ceiling $C_i$ says
that a candidate must not permit requests outside $\sem(C_i)$. Each floor $F_j$ says
that a candidate must at least include $\sem(F_j)$. Each liveness slice
$G_k\subseteq R_\Sigma$ says that an approved request class cannot disappear.
The third case matters because a policy that denies everything can satisfy many
safety constraints while still being useless.

Beyond the primitive plan, \sys{} also accepts authoring conveniences (such as
disjointness and rate-limit-style intent) that it grounds, reviews, and compiles
into checks or patches over $\Pi$ when their meaning is expressible in it. They extend
what the author can state rather than what the plan represents. Disjointness is 
especially important for
global policy stores: it can patch same-action floors so that one approved
workflow does not accidentally permit a mutually exclusive one. Rate-limit-style
atoms are governed by the same discipline: they contribute only through the
finite, verifier-facing obligations that \sys{} can compile from the reviewed
atom, not through a claim about runtime enforcement outside Cedar's policy
semantics.

\begin{definition}[Plan satisfaction]
\label{def:plan-satisfaction}
For a permit set $S\subseteq R_\Sigma$, write $S\models\Pi$ iff
\[
\begin{array}{ll}
\forall C_i\in\mathcal{C}. & S\subseteq \sem(C_i),\\
\forall F_j\in\mathcal{F}. & \sem(F_j)\subseteq S,\\
\forall G_k\in\mathcal{G}. & S\cap G_k\neq\emptyset.
\end{array}
\]
A candidate policy bundle $P$ satisfies $\Pi$, written $P\models\Pi$,
iff $\sem(P)\models\Pi$.
\end{definition}

Definition~\ref{def:plan-satisfaction} is the point where the prose target
becomes something the verifier can own: a candidate is too permissive when it
escapes a ceiling, too restrictive when it misses a floor, and has lost an
approved workflow when it fails to intersect a liveness slice. These are not
different phrasings of the same error; they are different semantic failures.

\textbf{The admissible interval.} 
The candidate policy must exist between these two bounds: it must be permissive
enough to include all floors and restrictive enough not to exceed any ceiling.
Let
\[
U_\Pi = \bigcap_{C_i \in \mathcal{C}} \sem(C_i),
\qquad
L_\Pi = \bigcup_{F_j \in \mathcal{F}} \sem(F_j),
\]
with the usual conventions that an empty ceiling intersection is
$R_\Sigma$ and an empty floor union is $\emptyset$.

\begin{theorem}[Synthesis sandwich]
\label{thm:sandwich}
A candidate policy bundle $P$ satisfies $\Pi$ iff
\[
L_\Pi \subseteq \sem(P) \subseteq U_\Pi
\quad\text{and}\quad
\forall G_k\in\mathcal{G}.\; \sem(P)\cap G_k\neq\emptyset .
\]
\end{theorem}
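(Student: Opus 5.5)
The plan is to unfold Definition~\ref{def:plan-satisfaction} with $S=\sem(P)$ and compare it clause by clause with the statement of Theorem~\ref{thm:sandwich}. The liveness clause is literally the same on both sides, so nothing needs to be shown there. The work is in two set-theoretic equivalences, valid for any $S\subseteq R_\Sigma$: first, $S\subseteq\sem(C_i)$ for every $C_i\in\mathcal{C}$ iff $S\subseteq U_\Pi$; second, $\sem(F_j)\subseteq S$ for every $F_j\in\mathcal{F}$ iff $L_\Pi\subseteq S$. Once both are in hand, the conjunction of the three clauses of plan satisfaction is exactly the displayed condition, and both directions of the theorem follow together.

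For the ceiling equivalence, I will use the universal property of intersection. If $S$ lies in every $\sem(C_i)$, then every element of $S$ lies in all of them, hence in $\bigcap_i\sem(C_i)=U_\Pi$. Conversely, $U_\Pi\subseteq\sem(C_i)$ for each $i$, so $S\subseteq U_\Pi$ gives each individual inclusion by transitivity. When $\mathcal{C}=\emptyset$, the left side is vacuously true. The right side is also true, because the stated convention sets $U_\Pi=R_\Sigma$ and $\sem(P)\subseteq R_\Sigma$ by definition of the permit-set denotation.

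For the floor equivalence, I will use the dual universal property of union. If every $\sem(F_j)\subseteq S$, then any element of $\bigcup_j\sem(F_j)$ lies in some $\sem(F_j)$ and hence in $S$. Conversely, each $\sem(F_j)\subseteq L_\Pi$, so $L_\Pi\subseteq S$ gives each individual inclusion by transitivity. When $\mathcal{F}=\emptyset$, both sides hold trivially, since $L_\Pi=\emptyset$.

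There is no real obstacle here. The only point needing care is the empty-family conventions, and these are exactly why the paper fixes $U_\Pi=R_\Sigma$ and $L_\Pi=\emptyset$ for empty families. The one thing I will state explicitly is that $\sem(P)\subseteq R_\Sigma$, which is needed for the empty-ceiling case. I will also remark that the theorem does not claim the interval is nonempty or that a satisfying $P$ exists. Nonemptiness would additionally require $L_\Pi\subseteq U_\Pi$ and $U_\Pi\cap G_k\neq\emptyset$ for every $k$, but that belongs to admission checking rather than to this equivalence.
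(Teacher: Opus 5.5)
Your proposal is correct and follows the paper's proof: unfold Definition~\ref{def:plan-satisfaction} at $S=\sem(P)$, turn the ceiling inclusions into $\sem(P)\subseteq U_\Pi$ and the floor inclusions into $L_\Pi\subseteq\sem(P)$, and carry the liveness clause over unchanged. Your explicit handling of the empty families (using $\sem(P)\subseteq R_\Sigma$ when $\mathcal{C}=\emptyset$) is extra care the paper leaves implicit, but the argument is the same.
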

\begin{proof}
The floor clauses of Definition~\ref{def:plan-satisfaction} are equivalent to
$\bigcup_{F_j\in\mathcal{F}}\sem(F_j)\subseteq\sem(P)$, which is
$L_\Pi\subseteq\sem(P)$. The ceiling clauses are equivalent to
$\sem(P)\subseteq\bigcap_{C_i\in\mathcal{C}}\sem(C_i)$, which is
$\sem(P)\subseteq U_\Pi$. The liveness clauses are unchanged by the
definition of $L_\Pi$ and $U_\Pi$. Combining the three equivalences gives
the result.
\end{proof}

The theorem says exactly what the verifier certifies at the end of a run:
the final policy is no smaller than the approved floors, no larger than the
approved ceilings, and not empty on any approved liveness slice. Floors and
ceilings form the sandwich in Fig.~\ref{fig:sandwich}. Liveness prevents the
sandwich from being satisfied by a policy that silently drops a required
workflow.

\begin{figure}[t]
\centering
\begin{tikzpicture}[x=1cm,y=1cm, every node/.style={font=\scriptsize}]
  \draw[thick, rounded corners=3pt] (0,0) rectangle (8.35,5.25);
  \node[anchor=north east, align=right, font=\scriptsize] at (8.18,5.08)
        {$R_\Sigma$ requests};

  \fill[blue!7, rounded corners=3pt] (0.55,0.55) rectangle (7.75,4.35);
  \draw[blue!55!black, thick, rounded corners=3pt] (0.55,0.55) rectangle (7.75,4.35);
  \node[anchor=north west, blue!55!black, align=left] at (0.72,4.20)
        {$U_\Pi$ ceiling\\maximum approved access};

  \fill[black!4, rounded corners=8pt] (2.15,2.25) rectangle (5.55,3.45);
  \draw[black!55, thick, dashed, rounded corners=8pt] (2.15,2.25) rectangle (5.55,3.45);
  \node[align=center] at (3.85,2.85)
        {candidate\\$\sem(P_t)$};

  \fill[orange!18, rounded corners=3pt] (2.20,0.88) rectangle (4.55,1.78);
  \draw[orange!75!black, thick, rounded corners=3pt] (2.20,0.88) rectangle (4.55,1.78);
  \node[orange!65!black, align=center] at (3.38,1.33)
        {$L_\Pi$ floor\\required access};

  \fill[green!12] (6.35,1.35) ellipse (0.92 and 0.48);
  \draw[green!50!black, thick] (6.35,1.35) ellipse (0.92 and 0.48);
  \node[green!45!black, align=center] at (6.35,1.35)
        {$G_k$\\live workflow};

  \fill[red!75!black] (5.95,4.75) circle (1.6pt);
  \node[red!70!black, font=\scriptsize\bfseries, fill=white, inner sep=1pt]
        at (5.27,4.75) {\textsc{tighten}};
  \draw[-{Stealth[length=1.7mm]}, red!70!black, thick]
        (5.95,4.67) -- (5.95,4.38);

  \node[teal!65!black, font=\scriptsize\bfseries, anchor=east, inner sep=0pt]
        at (2.70,2.03) {\textsc{loosen}};
  \draw[-{Stealth[length=1.7mm]}, teal!65!black, thick]
        (2.95,2.25) -- (2.95,1.78);

  \node[green!50!black, font=\scriptsize\bfseries, fill=white, inner sep=1pt]
        at (6.26,2.12) {\textsc{expand}};
  \draw[-{Stealth[length=1.7mm]}, green!50!black, thick]
        (5.55,2.30) -- (5.88,1.66);

\end{tikzpicture}
\caption{The synthesis sandwich. The dashed rectangle is an intermediate
candidate: it must stay inside the ceiling envelope, grow to include required
access, and hit each live workflow slice. Final acceptance requires
$L_\Pi \subseteq \sem(P) \subseteq U_\Pi$ and
$\sem(P)\cap G_k\neq\emptyset$. The same verifier call therefore induces
different repair directions: a request outside the ceiling tightens, a missing
floor loosens, and an empty liveness slice expands.}
\label{fig:sandwich}
\end{figure}

\textbf{Plan compatibility.} Some targets are broken before the first LLM
call. A reviewer may approve a floor that a ceiling forbids,
or a liveness slice that lies entirely outside the upper bound. In
those cases the problem is not that the model needs a better prompt. The target
itself must be repaired. At the set level, the ideal compatibility condition is
simple:
\[
L_\Pi \subseteq U_\Pi
\quad\text{and}\quad
\forall G_k \in \mathcal{G}.\; U_\Pi \cap G_k \neq \emptyset .
\]
The first clause says the required requests fit inside the approved upper
bound; the second says every approved liveness slice is reachable somewhere
inside that bound. This condition abstracts away Cedar syntax, so it should not
be read as an implementation claim that \sys{} solves a complete global
feasibility problem. The current system uses verifier-backed admission checks:
schema validation, boundary validation, pairwise floor-versus-ceiling checks
after derived-constraint compilation, identity-consistency checks, and
liveness and probe checks. These are target-admission checks, not repair packets.
They use some of the same verifier machinery that the signal layer later uses,
but they run before search and decide whether the target is coherent enough to
give to the model at all. If they fail, \sys{} reports a target-construction
problem to the reviewer instead of asking the LLM to synthesize around an
impossible plan.

\begin{proposition}[Conditional soundness]
If \sys{} returns a policy bundle $P$ for an approved plan $\Pi$, then
the returned artifact has passed Cedar validation, and \texttt{cedar symcc}
has certified the inclusion checks in Theorem~\ref{thm:sandwich} and the
implemented liveness checks in $\Pi$.
Consequently, the returned policy is semantically valid with respect to the approved
boundary plan: it stays within all ceilings, includes all floors, and
passes the approved liveness checks. Validity with respect to organizational
intent holds only under the explicit assumption that the human-approved
plan accurately captures that intent. \sys{} makes no stronger
claim. A compatible plan may still lack a concise Cedar implementation, or
the model may fail to find one within budget.
\end{proposition}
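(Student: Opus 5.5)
The proposition is essentially a statement about the control flow of Algorithm~\ref{alg:autocedar}, combined with Theorem~\ref{thm:sandwich}. I would prove it by inspecting every return point of the algorithm. There are exactly three: the target-admission failure, the budget failure, and the single success return $(\Sigma,\Pi,P_t,T)$ inside the loop. Only the last returns a policy bundle, so it suffices to reason about the path that reaches it. On that path, line-by-line, $P_t$ first passed the branch ``$P_t$ parses and type-checks under $\Sigma$''. This gives the first conjunct, Cedar validation. Next, $Q_t$ was computed as the verifier records for checking $P_t$ against $\Pi$, and every record in $Q_t$ passed.

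The second step is to make precise what ``the records in $Q_t$'' are. I would fix the contract that, for $\Pi=(\mathcal C,\mathcal F,\mathcal G)$, $Q_t$ contains one \texttt{symcc} inclusion query $\sem(P_t)\subseteq\sem(C_i)$ per ceiling, one query $\sem(F_j)\subseteq\sem(P_t)$ per floor, and one implemented nonemptiness query for $\sem(P_t)\cap G_k$ per liveness slice. I would also fix that derived constraints are compiled into such primitive obligations before search. Under the stated assumption that \texttt{symcc} is sound for the supported Cedar fragment, a passing record certifies the corresponding set relation. Taking the conjunction over all records yields exactly the three clauses of Definition~\ref{def:plan-satisfaction}, so $P_t\models\Pi$. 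Theorem~\ref{thm:sandwich} then rewrites this as $L_\Pi\subseteq\sem(P)\subseteq U_\Pi$ together with liveness, which is the ``stays within all ceilings, includes all floors'' reading.

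The remaining sentences are not mathematical claims but scope disclaimers. I would treat them by observing that nothing in the argument references $D$ or organizational intent; only $\Pi$ appears. Validity with respect to intent therefore follows only by substituting the assumption that $\Pi$ faithfully encodes intent. The non-claim about completeness is justified by noting that the budget-failure return exists, so termination without output is possible even for a compatible plan.

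The main obstacle is the second step. The soundness conclusion is only as strong as the correspondence between $Q_t$ and the plan clauses. In particular, the ``implemented liveness checks'' may be weaker than the set-level condition $\sem(P)\cap G_k\neq\emptyset$, for example probe-based checks. I would handle this by stating the liveness conclusion only relative to the implemented checks, as the proposition itself does. I would also make explicit the trusted-base assumptions: \texttt{symcc} soundness and faithful compilation of derived constraints.
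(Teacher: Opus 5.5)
Your argument is correct and follows the reasoning the paper relies on. The paper states this proposition without a separate proof; its justification rests on two facts:
\begin{itemize}
\item Algorithm~\ref{alg:autocedar} returns a bundle only through its single success branch, which requires passing the parse/type-check gate and an all-pass $Q_t$.
\item Theorem~\ref{thm:sandwich} turns the per-floor and per-ceiling checks into $L_\Pi\subseteq\sem(P)\subseteq U_\Pi$.
\end{itemize}
You also make the trusted base explicit: soundness of \texttt{cedar symcc} on the supported fragment, faithful compilation of derived constraints, and liveness claimed only relative to the implemented checks rather than the full clause of Definition~\ref{def:plan-satisfaction}. That is the right way to pin down what the proposition leaves informal.
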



\subsection{Intent Construction and Review}
\label{sec:hitl}

A verifier can say that $P\models\Pi$, but it cannot say whether $\Pi$ was the
right plan. That judgment belongs before search, because once the plan is
approved every later guarantee is relative to it. 
\sys{} therefore makes human review attach to behavior rather than to final
Cedar syntax. During target construction, each atom is proposed from a bounded
source packet and carries source-node attribution. \sys{} formalizes each
reviewable unit as an atom 
$
a=(s,k,u,z,\sigma_{\mathit{sym}},\sigma_{\mathit{intent}}),
$
where $s$ is the source span or source-node set, $k$ is the atom kind, $u$ is the
plain-English rendering shown to the reviewer, $z$ is the Cedar artifact
or schema edit, $\sigma_{\mathit{sym}}$ is the mechanical status, and
$\sigma_{\mathit{intent}}$ is the review status. An atom enters
the approved atom set iff
$\sigma_{\mathit{sym}}(a)=\mathsf{pass}$ and
$\sigma_{\mathit{intent}}(a)=\mathsf{accept}$.
$\sigma_{\mathit{sym}}$ says $z$ is well formed under Cedar syntax,
typing, and boundary-admission checks. $\sigma_{\mathit{intent}}$ says
that the reviewer accepts $u$ as the intended interpretation of the source
span. Neither status implies the other, and only approved atoms contribute
to $\Sigma$ or $\Pi$.

The split between schema atoms and property atoms is not bookkeeping. A schema
atom decides what concepts exist in the policy world: entity types, actions,
attributes, relationships, and context fields. It authorizes no request by
itself. A property atom decides behavior over that vocabulary: a floor, a
ceiling, a liveness slice, or a derived constraint such as disjointness that
compiles into verifier-facing checks. A correct-looking boundary over the wrong
vocabulary is still wrong, and a good vocabulary can still support a boundary
that is too broad, too narrow, or missing a required workflow.

This is why target construction can backtrack. A property atom may require
schema hooks that do not exist yet, such as a relationship field, status flag,
or context attribute. \sys{} does not show that unsatisfiable property to the
reviewer as if it were ready. It records the missing support, repairs the schema
candidate, and reruns the mechanical and intent checks before the property can enter the plan.
The target becomes fixed only after this repair loop and coverage accounting
finish.

Reviewers are not asked to certify an unchecked generated policy. They approve
the target that the verifier will later enforce. When two admissible Cedar encodings
of the same intent behave differently, \sys{} can present the difference as a request rather than
as a Cedar diff: for example, whether a role-scoped user may read a sensitive
field under a particular context. The reviewer chooses behavior, and the chosen behavior
becomes part of $\Pi$.

After synthesis, the trace $T$ records source spans, approved atoms, boundary
policies, verifier records, and the candidate clauses associated with those
records. It is an audit aid, not a proof of clause provenance. It lets a
reviewer ask why a clause was included and which approved boundary it
helped satisfy, while the semantic guarantee still comes from the verifier.
Benchmark runs may use simulated review as a measurement instrument, but those
runs do not create intent authority. The correctness claim rests on
human-approved targets.

\subsection{Signal-Layer}
\label{sec:mechanisms}

The first failed candidate is where verifier-guided synthesis is most prone to
misdirection. The verifier can be completely precise and still not tell the
model what to do next. A failed check might mean that the target itself is
broken, that the candidate is too broad, that it is too narrow, that it cannot
type-check, that a reviewed workflow has disappeared, or that the model is
cycling through the same bad repair. If \sys{} sends these outcomes back as one
generic rejection, the model can move in exactly the wrong direction: loosen a
ceiling failure, tighten a floor failure, or remove the workflow the reviewer
wanted preserved.

The signal layer is the translation boundary that prevents this collapse. It
does not make the verifier stronger. It makes verifier evidence usable without
letting that evidence rewrite the reviewed target. A signal record is a typed
object
$
q=(\ell,o,\tau,\kappa,w,e),
$
where $\ell$ says which layer produced the record, $o$ names the oracle or
check, $\tau$ records the boundary or diagnostic kind, $\kappa$ records the
relation being tested, $w$ is any witness or counterexample, and $e$ is
localization evidence such as a candidate/reference excerpt, diff, or
diagnosis. \sys{} then computes
$
\Phi(Q_t,P_t,\Pi,H_t)\to m_{t+1},
$
from the current records $Q_t$, failed candidate $P_t$, approved plan $\Pi$,
and search history $H_t$. The output is the next repair packet. The packet may
select, normalize, compress, and order evidence. It may not mutate $\Pi$,
invent a witness, reverse polarity, or add an unapproved obligation.


This typing is what keeps the current implementation from becoming a flat list of
checks.
Signals fall into four classes: target admission, semantic direction, local 
diagnostics, and history. Each has distinct obligations for what may and may not 
reach the model.
Table~\ref{tab:mechanisms} states these four obligations; Algorithm~\ref{alg:signal} gives the full translation. These are different uses of verifier evidence, not different excuses to change the oracle.

\begin{table}[t]
\centering
\caption{Signal-layer obligations and repair information preservation}
\label{tab:mechanisms}
\footnotesize
\setlength{\tabcolsep}{2.5pt}
\renewcommand{\arraystretch}{0.92}
\begin{tabular}{@{}L{0.24\columnwidth}L{0.69\columnwidth}@{}}
\toprule
Obligation & Preserved signal and invariant \\
\midrule
O1: target well formed &
Target-admission records preserve schema validity, satisfiable atoms, never-errors, match vacuity/broadness, identity consistency, derived-constraint compilation, duplicate/equivalent atoms, and floor-ceiling consistency before search starts. \\
\midrule
O2: semantic direction &
Boundary kind, checked relation, witness, and action-local evidence preserve whether the candidate should \textsc{tighten}, \textsc{loosen}, \textsc{expand}, or repair a disjointness collapse. \\
\midrule
O3: local diagnostic repair &
Parse, validation, setup, runtime-safety, excerpt/diff, session-binding, membership-drift, union-ceiling, role-intersection, and global-constraint diagnostics keep local repairs local. \\
\midrule
O4: history and non-progress &
Candidate hashes, recurring failure sets, oscillation windows, and repeated action-local diffs surface non-progress without changing the oracle. \\
\bottomrule
\end{tabular}
\renewcommand{\arraystretch}{1}
\end{table}
\begin{algorithm}[t]
\caption{Signal-layer translation $\Phi$}
\label{alg:signal}
\footnotesize
\begin{algorithmic}[1]
\REQUIRE Check records $Q_t$; candidate $P_t$; approved plan $\Pi$; history $H_t$
\ENSURE Repair packet $m_{t+1}$
\STATE Select failed records and attach their layer $\ell$, oracle $o$, kind $\tau$, relation $\kappa$, witness $w$, and localization evidence $e$
\IF{any target-admission record fails}
  \RETURN target-construction diagnostic; do not ask model to repair $P_t$
\ENDIF
\FORALL{candidate-side semantic records $q$}
  \IF{$q$ is a ceiling or disjointness over-permission failure}
    \STATE mark direction \textsc{tighten}
  \ELSIF{$q$ is a floor failure}
    \STATE mark direction \textsc{loosen}
  \ELSIF{$q$ is a liveness failure}
    \STATE mark direction \textsc{expand}
  \ENDIF
  \STATE attach only approved boundary text, witness $w$, and localization evidence $e$
\ENDFOR
\STATE Convert parse, validation, setup, runtime-safety, and structural diagnostics into local repair hints
\STATE Compare $P_t$ and failed records with $H_t$ to detect repetition, oscillation
\RETURN repair packet with selected boundaries, directions, witnesses, local hints, and history summary
\end{algorithmic}
\end{algorithm}

O1 keeps malformed targets out of the search loop. An atom that never matches
any request, a disjointness constraint whose compiled form is not actually
disjoint, a floor that is outside a same-action ceiling, or an identity check
that compares a staff user to a patient entity is not a hard synthesis problem.
It is an invalid target. The right response is to send the reviewer or target
construction step a precise diagnostic, not to ask the LLM to synthesize around an
invalid target.

O2 preserves direction. The same witness can require opposite repairs depending
on which boundary produced it. A request outside an approved ceiling is excess
permission; a request inside an approved floor may be missing permission.
Direction is therefore a function of the typed check record:
\[
\mathcal{D}_t(q)=
\begin{cases}
\textsc{tighten}, & q.\tau\in\{\mathsf{ceiling},\mathsf{disjoint}\},\\
\textsc{loosen}, & q.\tau=\mathsf{floor},\\
\textsc{expand}, & q.\tau=\mathsf{liveness},\\
\textsc{local}, & q.\tau\in\{\mathsf{parse},\mathsf{type},\mathsf{safety}\}.
\end{cases}
\]
The relation field $q.\kappa$ still has to agree with the kind: an upper-bound
failure tightens, a lower-bound failure loosens, and a nonempty failure expands.
If the boundary and relation kind disagree, the record is malformed feedback and
should not become a model instruction.

O3 keeps local diagnostics local. A candidate that does not parse, does not
validate, or can raise runtime Cedar errors has no meaningful permit-set
denotation to compare against the synthesis sandwich. \sys{} may ask the model to fix the
offending expression, declaration, optional-attribute guard, or clause under the
same schema. It may also point to the candidate block that failed, show the
reference block it was checked against, or explain a session-binding or
set-membership encoding mismatch. None of those local repairs becomes a new
floor, ceiling, or liveness requirement.

O4 makes search history part of the repair state. A single verifier call cannot
say whether search is improving, repeating, or oscillating. \sys{} records
recurring failed boundaries, repeated candidates, and failures that trade off
against each other across iterations. It can tell the model that it is cycling
between a floor and a ceiling, or that a role-scoped forbid rule keeps removing
access required by a floor. History changes the repair packet, not the oracle.
This is how \sys{} remembers non-progress without letting the model rewrite the
target.

\section{CedarBench: A Cedar Benchmark}\label{sec:cedarbench}
If the claim were only that an LLM can produce Cedar syntax, an ordinary
natural-language-to-policy benchmark would be enough. \sys{} is making a
different claim. It says that the hard part is holding a global semantic target
fixed while verifier evidence guides a stochastic proposer. CedarBench exists
to measure that interface.

The current CedarBench repository contains 221 tasks. Seventy-nine are mutation
tasks rooted in official Cedar example domains for GitHub-style repositories,
document clouds, hotel chains, sales organizations, streaming services,
tag-and-role systems, tax-preparation workflows, and clinical data access. The
other 142 are hand-authored real-world tasks covering scoped tokens,
delegation, revocation, time windows, break-glass access, role intersections,
and large policy stores. Every task is
executable: it has a prose specification, a Cedar schema, a Python verification
plan, and Cedar reference policies. The reference policies are not deployment
answers. They are formal witnesses for the reviewed floors, ceilings, liveness
probes, and related checks that define the target for that task.

Each CedarBench task pairs a requirement with the objects
\sys{} needs to make the requirement checkable: a vocabulary, reviewed
semantic boundaries, and executable verifier checks. The unit of measurement is
not whether a generated snippet looks plausible in isolation. It is whether a
candidate policy store satisfies the whole reviewed target: floors are included,
ceilings are not exceeded, liveness slices remain reachable, and the final
artifact validates under the shared schema.

This matters because independent translation hides failure mode we care
about. A system can translate fifty fragments into fifty locally reasonable
rules and still produce global trouble: one rule grants outside the intended
scope, another relies on implicit exception, and a later rule makes an
approved workflow unreachable. CedarBench therefore makes cross-policy
interference visible: whether the policy store, taken as a whole, lands
inside the approved authorization boundary.


CedarBench should not be read as a production-policy distribution or as proof
that every organizational intent has been captured perfectly. Its target is
narrower and more useful for this paper: it gives \sys{} a replayable substrate
for asking whether reviewed intent constraints, symbolic checks, and
model-facing repair signals actually work together.

\section{Experiments}
\label{sec:evaluation}
We consider the following research questions:

\noindent\textbf{RQ1 Synthesis effectiveness}: To what extent can \sys{} synthesize Cedar policies that satisfy formal verification targets, and at what computational cost? \\
\textbf{RQ2 Signal-layer contribution}: How does verifier-to-model feedback contribute to \sys{} synthesis performance? \\
\textbf{RQ3 Verification validity}: Does satisfying the verification plan produced by \sys{} reliably predict policy correctness with respect to NL intent? \\
\textbf{RQ4 Baseline comparison}: How does \sys{} policy quality compare to direct LLM-based policy generation?

\subsection{Experimental Setup}

\textbf{Datasets.} Our evaluation considers two datasets: the CedarBench benchmark discussed in Section~\ref{sec:cedarbench}, and three natural-language access-control requirements corpora drawn from iTrust, CyberChair, and the IBM course registration system, following the ACRE/REDE access-control extraction line of work~\cite{slankas2013acre,slankas2014rede}.
The reported external-corpus scenarios use 401 IBM Course Management
sentences, 303 CyberChair sentences, and 471 iTrust-for-Text2Policy sentences, from which access control sentences were extracted from. We use these access control sentences in our experiments, rather than the full natural language description of each system.
Those corpora provide realistic natural-language and NLACP-style access-control
material, but they do not provide Cedar schemas or policies.

\textbf{Metrics.} We record the model
configuration, iteration budget, number of checks, convergence status, final
loss, wall-clock time, token usage, and estimated API cost. The current
CedarBench driver runs the checked-in targets with a 20-iteration budget and
checkpoints each scenario so long runs can be resumed. We report both GPT-5.5
low, a strong proposer, and Haiku 4.5, a cheaper and faster proposer, because
the question is not only whether the loop works with the strongest model but
whether verifier-guided repair still carries a smaller model. We use the Cedar CLI and \texttt{cedar symcc} with the cvc5 SMT
solver~\cite{barbosa2022cvc5}. Model cost is
reported separately from local solver time and human review time.


\subsection{RQ1: Synthesis Effectiveness}
RQ1 asks the most direct question: once the target has been made executable, can
\sys{} actually synthesize policies that satisfy it? On CedarBench, the answer is
yes for both model configurations we tested. Table~\ref{tab:rq1-synthesis}
reports per-scenario averages over all 221 CedarBench scenarios. Both models
converge on every scenario. GPT-5.5 low takes fewer repair iterations and fewer
tokens per scenario. Haiku 4.5 takes more iterations, but is faster and cheaper
in this run. \textbf{Thus we find that \sys\ can successfully synthesize Cedar policies that satisfy formal verification targets.}


\begin{table}[t]
\centering
\caption{CedarBench synthesis effectiveness for \sys{} on all 221 scenarios. All reported numbers are averages.}
\label{tab:rq1-synthesis}
\scriptsize
\setlength{\tabcolsep}{4pt}
\begin{tabular}{@{}lrrrrrrr@{}}
\toprule
Model & Convergence & Iters & Time & Tokens & Cost \\
\midrule
GPT-5.5 low & 221/221, 100\% & 1.67 & 24.59s & 9,084 & \$0.0405 \\

Haiku 4.5 & 221/221, 100\% & 2.51 & 8.93s & 12,397 & \$0.0185 \\
\bottomrule
\end{tabular}
\end{table}

\subsection{RQ2: Signal-Layer Contribution}
The ablation asks how useful the Signal Layer contribution is inside of \sys{}. We use a fixed 100-scenario CedarBench slice and hold fixed the
model (GPT-5.5 low), selected scenarios, Cedar validator, schemas, verification
plans, and reference policies. The schema-only condition is also not a raw
natural-language-to-Cedar baseline: it receives the approved schema produced by
\sys{}'s schema-generation stage and must synthesize only the policy. We then
compare four full-slice conditions: the model sees only that approved schema
and prose, sees the schema plus property atoms in a one-shot setting, sees
native verifier feedback in an iterative loop, or sees the full
\sys{} repair-oriented signal stack.

\begin{table}[t]
\centering
\caption{Signal-layer ablation on a 100-scenario CedarBench slice. Loss is the
mean final verifier-violation count. Mean iter. averages synthesis iterations
over completed scenarios, so it can be fractional. Cost is total run cost.}
\label{tab:signal-ablation}
\scriptsize
\setlength{\tabcolsep}{2.5pt}
\begin{tabular}{@{}L{0.38\columnwidth}rrrrr@{}}
\toprule
Mode & Scen. & Conv. & Mean loss & Mean iter. & Cost \\
\midrule
Schema only & 100 & 85 & 15.58 & 1.00 & \$1.42 \\
Schema + atoms, one-shot & 100 & 84 & 0.27 & 1.00 & \$1.62 \\
Native verifier loop & 100 & 94 & 0.10 & 2.37 & \$5.73 \\
\rowcolor{green!10}
Full signal stack & 100 & 100 & 0.00 & 2.20 & \$5.83 \\
\bottomrule
\end{tabular}
\end{table}

Table~\ref{tab:signal-ablation} shows why pass rate alone is misleading.
Residual loss counts how many verifier violations remain at the end of a run, so
it separates a near miss from a policy that is still far from the reviewed
target. Schema-only and one-shot schema-plus-atoms converge on 85/100 and
84/100, but their residual losses differ sharply: 15.58 versus 0.27. The formal
target pulls the model close even when the first candidate is not accepted.
Native verifier feedback is also strong, converging on 94/100 with mean loss
0.10. The full signal stack closes the last six failures, reaching 100/100 with
zero loss and similar cost. \textbf{RQ2 therefore has a split answer: reviewed
targets reduce the size of mistakes, and shaped repair signals close the
stubborn remainder.}

\subsection{RQ3 \& RQ4: Policy Correctness and Baseline Comparison}
We note that to the best of our knowledge, \sys{} is the only tool that can synthesize Cedar policies and schemas from natural-language requirements against a formal verification target. A direct LLM baseline can generate policy text, but without a reviewed boundary plan there is no formal notion of correctness to check against.
To answer RQ3 and RQ4, we conduct three experiments which all study policy correctness and baseline comparison from different angles. We therefore organize around three evidence types: formal correctness, semantic correctness, and human alignment. We use two baseline categories throughout. In \textbf{Category 1}, the model receives only the natural-language (NL) intent and must generate both the Cedar schema and policy. In \textbf{Category 2}, the model receives the same NL intent together with the schema produced by \sys{} and must generate only the Cedar policy. Category 2 puts all candidate policies under the same \sys{} schema and approved property atoms, so the policies can be checked against the same formal targets. Category 1 lets each model invent its own schema, so there is no common property target for comparing those outputs; we therefore report only validation for that category.

\begin{table}[t]
\centering
\caption{Validation pass rates. \textbf{Category 1}: NL only; model generates schema and policy. \textbf{Category 2}: NL + \sys{} schema; model generates policy only.}
\label{tab:validation-results}
\scriptsize
\setlength{\tabcolsep}{3pt}
\begin{tabular*}{\columnwidth}{@{\extracolsep{\fill}}llrrrr@{}}
\toprule
\multirow{2}{*}{Setting} &
\multirow{2}{*}{Model} &
\multicolumn{2}{c}{CedarBench (221)} &
\multicolumn{2}{c}{Real-world (3)} \\
\cmidrule(lr){3-4}
\cmidrule(lr){5-6}
& & Pass@1 & Pass@3 & Pass@1 & Pass@3 \\
\midrule
Category 1 & Haiku 4.5 & 0/221   & 0/221   & 0/3 & 0/3 \\
Category 1 & GPT-5.5   & 155/221 & 199/221 & 0/3 & 0/3 \\
Category 2 & Haiku 4.5 & 146/221 & 148/221 & 0/3 & 0/3 \\
Category 2 & GPT-5.5   & 196/221 & 204/221 & 0/3 & 0/3 \\
\bottomrule
\end{tabular*}

\vspace{0.15em}

\colorbox{green!10}{%
\begin{minipage}{0.96\linewidth}
\centering
\scriptsize
\textbf{\sys{}:} 221/221 CedarBench; 3/3 real-world.
\end{minipage}%
}
\end{table}

\subsubsection{Formal Correctness}

We probe formal correctness at two levels. First, \textbf{Validation} uses \texttt{cedar validate} to check whether the generated Cedar artifacts are well-formed: the policy must parse, type-check, and conform to its schema. Second, \textbf{Property checking} uses \texttt{cedar symcc} to ask whether a validation-passing Category~2 policy satisfies the approved floor, ceiling, and liveness atoms for the same NL intent. We can run property checks in Category~2 only because the policy is written against the \sys{} schema, so the approved atoms and the candidate policy share the same vocabulary. For the real-world scenarios, when a generated Category~2 policy failed validation only because of trivial syntax or casing errors, we repaired those errors without changing the policy logic and then included it in property checking; policies requiring non-trivial modeling changes were left unevaluated.

\paragraph{Validation}
Table~\ref{tab:validation-results} reports validation results for both input settings on CedarBench and the three real-world scenarios, with Pass@1 denoting the initial generation and Pass@3 denoting validation in at least one of three independent generations. On CedarBench, Haiku~4.5 fails all Category~1 cases, while GPT-5.5 improves from 155/221 at Pass@1 to 199/221 at Pass@3. Supplying the \sys{} schema in Category~2 improves validation for both models, with GPT-5.5 reaching 204/221 at Pass@3. On the real-world scenarios, neither direct baseline validates in Category~1 or Category~2 across three attempts; the common failures are invalid Cedar syntax and schema-policy mismatches. \sys{} validates on all 221 CedarBench scenarios and all three real-world scenarios.







\paragraph{Property checking}
We run property checking only after the validation gate. On CedarBench, this means the validation-passing Category~2 policies: 148 Haiku-4.5 policies and 204 GPT-5.5 policies at Pass@3. On the real-world scenarios, this means the two GPT-5.5 Category~2 policies for IBM and CyberChair that could be made validation-passing with only trivial syntax or casing repairs.

\begin{table}[t]
\centering
\caption{CedarBench property-checking results for Category 2.}
\label{tab:cedarbench-properties}
\scriptsize
\setlength{\tabcolsep}{5pt}
\renewcommand{\arraystretch}{1.18}
\begin{tabular}{lcc}
\toprule
Generator or method
& Full property pass
& \begin{tabular}[c]{@{}c@{}}Property-check pass rate\\\end{tabular} \\
\midrule
GPT-5.5
& \begin{tabular}[c]{@{}c@{}}79.6\% (176/221)\end{tabular}
& \begin{tabular}[c]{@{}c@{}}88.1\% (4397/4991)\end{tabular} \\
\midrule
Haiku 4.5
& \begin{tabular}[c]{@{}c@{}}1.4\% (3/221)\end{tabular}
& \begin{tabular}[c]{@{}c@{}}72.6\% (2758/3798)\end{tabular} \\
\midrule
\rowcolor{green!10}
\sys{}
& \begin{tabular}[c]{@{}c@{}}100.0\% (221/221)\end{tabular}
& \begin{tabular}[c]{@{}c@{}}100.0\% (all checks)\end{tabular} \\
\bottomrule
\end{tabular}

\end{table}
\begin{figure}[t]
\centering
\includegraphics[width=\linewidth]{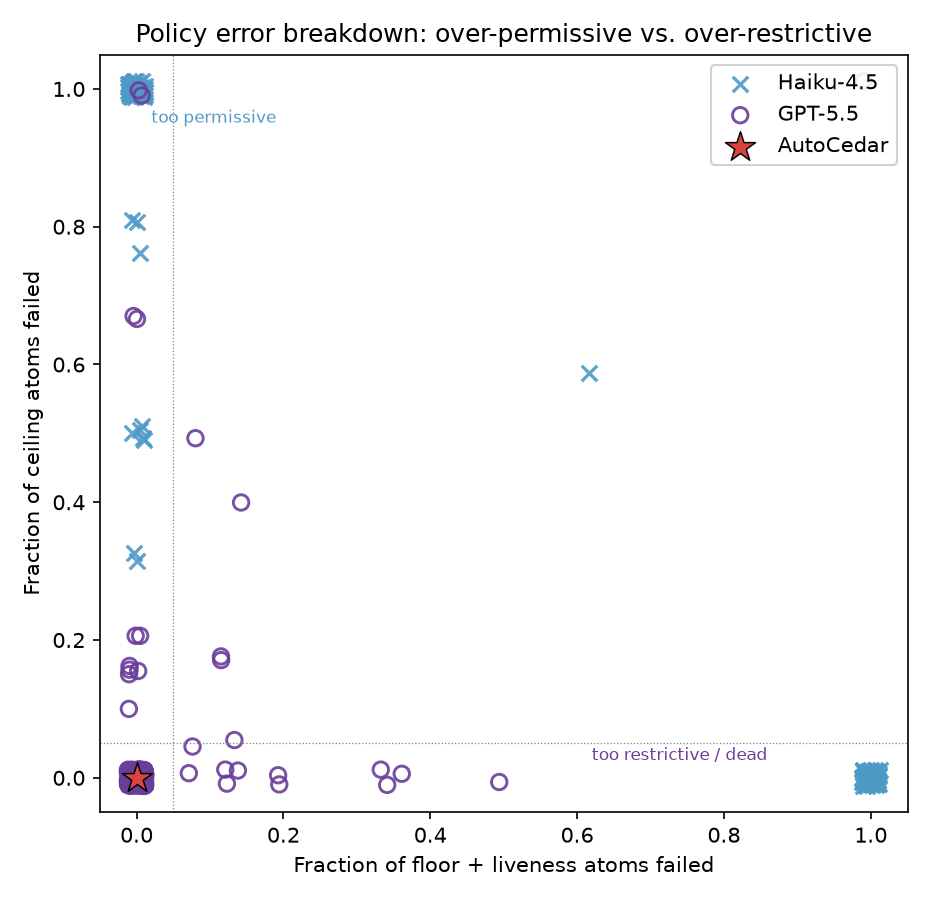}
\caption{Error profile of validation-passing policies on CedarBench}
\label{fig:error-breakdown}
\end{figure}

\begin{figure}[t]
\centering
\includegraphics[width=\linewidth]{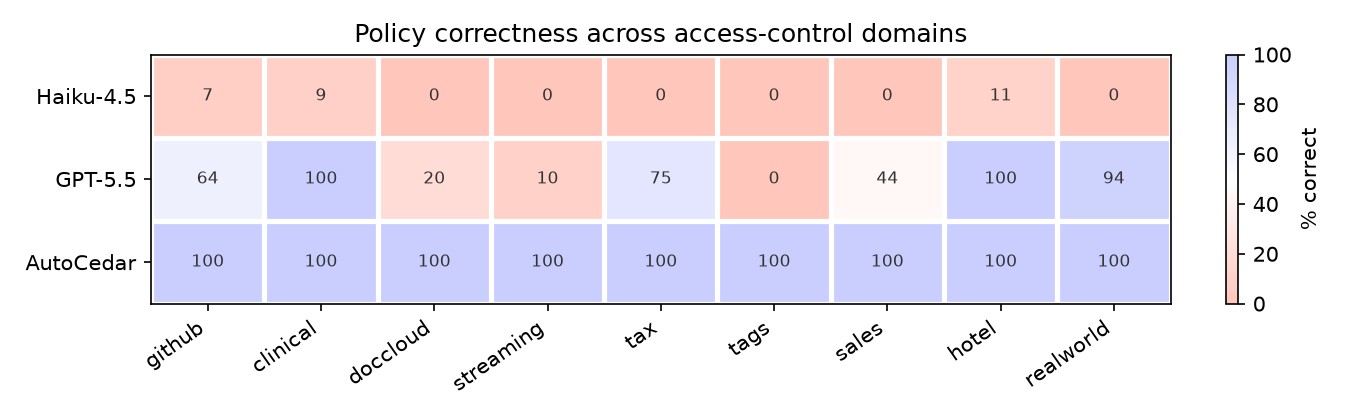}
\caption{Fully-correct rate (\%) across CedarBench's domains. Each cell is the percentage of that domain's scenarios in which every property atom passes.}
\label{fig:domain-heatmap}
\end{figure}

\begin{table}[t]
\centering
\caption{Real-world property and semantic request results. Baseline rows are Category~2; ``repair'' means only trivial syntax or casing fixes. N/A indicates invalid schema/policy.}
\label{tab:realworld-results}
\label{tab:realworld-properties}
\label{tab:semantic-requests}
\scriptsize
\setlength{\tabcolsep}{3pt}
\renewcommand{\arraystretch}{1.1}
\begin{tabular*}{\columnwidth}{@{\extracolsep{\fill}}llcc@{}}
\toprule
Scenario & Policy
& \begin{tabular}[c]{@{}c@{}}Properties\\passed\end{tabular}
& \begin{tabular}[c]{@{}c@{}}Requests\\matched\end{tabular} \\
\midrule
IBM & GPT-5.5 (repair) & 20/21 & 72/76 \\
IBM & Haiku 4.5 & N/A & N/A \\
\rowcolor{green!10}
IBM & \sys{} & 21/21 & 76/76 \\
\midrule
CyberChair & GPT-5.5 (repair) & 8/20 & 58/68 \\
CyberChair & Haiku 4.5 & N/A & N/A \\
\rowcolor{green!10}
CyberChair & \sys{} & 20/20 & 68/68 \\
\midrule
iTrust & GPT-5.5 & N/A & N/A \\
iTrust & Haiku 4.5 & N/A & N/A \\
\rowcolor{green!10}
iTrust & \sys{} & 35/35 & 38/38 \\
\bottomrule
\end{tabular*}
\end{table}

Table~\ref{tab:cedarbench-properties} reports CedarBench results. \sys{}
satisfies all approved properties for all 221 scenarios; GPT-5.5 fully passes
176/221 and Haiku~4.5 only 3/221. Individual property-check pass rates expose
near misses: GPT-5.5 satisfies 4397/4991 checks and Haiku~4.5 satisfies
2758/3798. Table~\ref{tab:realworld-results} shows the same pattern on the
real-world scenarios: \sys{} passes every property, while repaired GPT-5.5
policies pass 20/21 for IBM and 8/20 for CyberChair.

Figure~\ref{fig:error-breakdown} plots ceiling violations against floor and
liveness violations for validation-passing Category~2 policies. \sys{} sits at
the origin; GPT-5.5 has small scattered misses; Haiku~4.5 collapses toward
over-permissive or over-restrictive/dead policies. Figure~\ref{fig:domain-heatmap}
shows the gap by domain: GPT-5.5 collapses on \texttt{tags},
\texttt{streaming}, and \texttt{doccloud}, while Haiku~4.5 is near-zero
throughout.

\subsubsection{Semantic Correctness} 
We evaluate semantic correctness using authorization requests. This experiment addresses RQ3 by checking whether policies make the expected authorization decisions on concrete scenarios derived from the NL intent. For each real-world scenario, we constructed a semantic test set from the NL intent. First, each request was justified by one or more explicit access-control statements in the NL. Second, we instantiated a scenario-specific entity database containing the principals, resources, relationships, and state needed to make those requests executable. Third, we included only requests whose expected decision was determined by the NL intent itself. Each semantic request was translated into a Cedar request under the \sys{} schema.

Table~\ref{tab:realworld-results} shows that \sys{} matches all semantic requests across the three real-world scenarios: 76/76 for IBM, 68/68 for CyberChair, and 38/38 for iTrust. The repaired GPT-5.5 baselines match 72/76 requests for IBM and 58/68 requests for CyberChair. Haiku 4.5 and the iTrust baselines are N/A because they did not pass Cedar validation. The request failures align with the property failures. In IBM, the failed requests exercise the same valid registration behavior captured by the failed floor property. In CyberChair, the failed requests exercise the same role-boundary mistakes exposed by the failed ceiling properties. Thus, the request-level failures provide concrete examples of the same policy weaknesses detected by property checking.

\subsubsection{Human Alignment}
This experiment asks whether human readers prefer the policy artifacts produced by
\sys{} or by direct LLM generation when they must answer concrete
access-control questions. We ran a blinded artifact-judgment study with 14
participants. Participants saw requirement context, an access request, and Cedar
excerpts from two anonymized systems. The A/B labels were randomized per probe.
For each probe, participants selected which artifact better handled the request,
or chose tie/neither, and could explain their judgment.

The study produced 98 probe-level answers across two scenarios: iTrust
credentials, sessions, and patient scope; and gradebook external-grade
conflicts. Participants selected the \sys{} artifact in 82/98 answers (83.7\%),
the direct LLM baseline in 8/98 answers (8.2\%), and tie/neither in 8/98 answers
(8.2\%). The preference was strongest on iTrust, where \sys{} was selected in
52/56 answers (92.9\%). Gradebook was more ambiguous but still favored
\sys{}: 30/42 answers (71.4\%) selected the \sys{} artifact.

The qualitative explanations match the technical failure modes above.
Participants preferred artifacts that made the relevant boundary explicit:
sensitive credential fields, session state, represented-patient scope, grade
kind, and role-conflict forbids. They criticized baseline artifacts when they
used broad permits or lacked the schema fields needed to enforce the intended
boundary. This study is not a proof of policy correctness, and it does not yet
evaluate the live HITL workflow; optional HITL feedback was sparse. It does show
that, under concrete review questions, readers more often judged the
\sys{}-derived artifact as the one that better matched access-control intent.

\subsubsection{Answering RQ3+RQ4}
Taken together, the formal property checks, semantic request tests, and blinded
artifact-judgment study provide complementary evidence for RQ3 and RQ4. For RQ3,
the results show that \sys{}'s verification plan is a useful predictor of
NL-level correctness: policies satisfying the approved atoms also match all
semantic requests in the real-world scenarios, while failed atoms correspond to
concrete request-level errors. For RQ4, direct LLM generation improves when
given the \sys{} schema, but schema guidance alone is insufficient:
validation-passing GPT-5.5 and Haiku~4.5 policies still violate floor or ceiling
properties and fail request tests. \sys{} performs better because it does not
stop at producing syntactically valid Cedar; it synthesizes against explicit,
reviewed correctness targets and uses verifier feedback to close the gap between
NL intent and executable policy. \textbf{Thus, the combined evidence supports both the
validity of the verification targets and the advantage of verifier-guided
synthesis over direct LLM policy generation.}

\section{Threats to Validity}
\label{sec:threats}

\textbf{Target validity.} \sys{} certifies a policy against the reviewed
Cedar schema and verification target, not against organizational truth in the
abstract. If a requirement is missing from the source material, or if a reviewer
approves the wrong atom, the verifier will faithfully check the wrong boundary.
In our experiments, the CedarBench plans are benchmark artifacts and the
external-corpus plans were constructed for this study; they should be read as
reviewed semantic targets for evaluation, not as independently audited
production policies.

\textbf{Request coverage.} Symbolic checks are the primary acceptance
criterion, but request-level evaluation is still only a sample of concrete
behavior. Generated requests can miss corner cases, especially in domains with
large entity graphs, temporal conditions, or context-dependent workflow state.
We therefore use request tests as a sanity check on policy behavior, not as a
replacement for the floor, ceiling, liveness, and disjointness obligations.

\textbf{Model and reviewer effects.} LLM outputs are nondeterministic, and
direct-generation baselines are especially sensitive to retries; this is why we
report retry-based validation results rather than a single sample. The HITL
parts of the evaluation also depend on the reviewer population and review
procedure. A small or author-driven review study can show whether the artifacts
are inspectable, but it cannot by itself prove how production engineers would
review the same targets under deployment pressure.

\section{Related Work}
\label{sec:related}

\textbf{CEGIS when the proposer is stochastic.} Counterexample-guided
inductive synthesis pairs generation with a verifier that rejects bad
candidates~\cite{solarlezama2008,abate2018}. In classical settings, the
generator is often symbolic or enumerative, so earlier counterexamples can
be removed from the search space by construction. An LLM breaks that
assumption. It can read counterexamples and try again, but it does not
maintain a symbolic version space and can reintroduce an old failure in the
next sample. Recent LLM systems still use this verifier-in-the-loop pattern
for repair, execution-guided debugging, formal proofs, and
planning~\cite{polu2020,jiang2023,first2023,varambally2025,kung2026leap}.
\sys{} keeps the useful discipline of CEGIS--every accepted candidate is
checked against the full target--while making explicit what the LLM does
not provide: monotone elimination of prior failures.

\textbf{Verifier-guided formalization.} Formal mathematics systems show how
much an LLM can gain from a mechanically checked environment:
language-model provers have been paired with automated theorem provers,
interactive proof environments, retrieval over verified libraries, proof
repair, and proof-assistant feedback during search or training
~\cite{jiang2022thor,yang2023leandojo,jiang2023,first2023,xin2024deepseekprover15}.
LEAP~\cite{kung2026leap} combines informal proof sketches, Lean compiler
feedback, and a graph that records proof decomposition and memoizes verified
subgoals. Roundtrip verification work makes the neighboring point for
autoformalization: the formal tool can expose semantic drift and support scoped
repair~\cite{amrollahi2026}.
\sys{} starts from organizational intent and
must first construct the object to be checked: a reviewed schema, boundary
plan, and audit trace that records how source spans, approved atoms, verifier
records, and candidate clauses were associated.

\textbf{LLM policy authoring.} Prior work asks whether LLMs can synthesize
access-control policies from examples, request specifications, or natural
language~\cite{vatsa2025synth}, and whether they can summarize or
reconstruct existing policies at scale~\cite{vatsa2025explore}. Those
studies expose the failure mode that motivates this paper: syntax and
plausibility are easier than semantic equivalence to the intended request
set. \sys{} therefore moves the question from ``can the model write a
policy?'' to ``can it search under a reviewed target checked by a
verifier?''

\textbf{Symbolic analysis as oracle, not repair.} Policy analyzers such
as Zelkova use SMT solving to reason about cloud policies~\cite{zelkova2018},
and related systems quantify or repair permissiveness in access-control
policies~\cite{eiers2022quantifying,eiers2023repair}. These tools can
produce the semantic facts we need: inclusion results, equivalence results,
and witnesses. A stochastic proposer needs something different from the raw
oracle output. It needs to know whether a witness means excess permission,
missing permission, or a vanished workflow. \sys{} focuses on that missing
interface between symbolic policy evidence and model-facing repair.
\section{Conclusion}
\label{sec:conclusion}

An LLM that writes a plausible policy has not earned authority over access
control. \sys{} assigns it a narrower role: propose candidates after a
reviewed target has been fixed, and let a verifier judge every proposal
against that target. Floor policies state what access must remain possible,
ceiling policies state what access must not leak, and liveness slices keep
approved workflows from disappearing.

The paper's main claim is that the hard interface is not only between
natural-language and Cedar. It is also between a symbolic checker and a
stochastic proposer. A verifier can return an exact failure, but a model
needs a repair instruction that preserves boundary identity, witness
request, repair direction, provenance, and search history. The signal layer
is that contract. It lets the model use verifier evidence without letting
the model change the oracle.

The resulting guarantee is intentionally narrow. A returned policy is
certified by \texttt{cedar symcc} against the approved boundary plan; the
plan's correspondence to organizational intent remains a human-review
judgment. That separation is the point. Verifier-guided LLM systems for
security-sensitive synthesis should expose the target, the intent boundary,
and the verifier-to-model signal as auditable state rather than hiding them
inside prompts.


\end{document}